\lstdefinestyle{MyListStyle}{
language=C, 
basicstyle=\ttfamily\footnotesize, 
tabsize=3, 
captionpos=b, 
breaklines=true, 
breakatwhitespace=false, 
escapeinside={\%*}{*)}, 
frame=single, 
numbers=left,
numbersep=5pt,
numberstyle=\tiny,
}
\newtheorem{definition}{Definition}
\newtheorem{claim}{Claim}
\newtheorem{proof}{Proof}
\newcommand{\ITPonly}{\ensuremath{\mathit{ITP}}}
\begin{document}

\title{Irrelevant Predicates}

%




\maketitle





\section{Property-irrelevant predicates}
Loops (and recursions) are major hurdles in scalability of property 
verification tools (verifiers). Although slicing removes
loops which have no bearing on assert
statement in terms of its value and reachability, sliced programs
still have loops challenging scale up of
the verifiers. If we can transform a program by eliminating 
some of 
such loops, it is more likely that a given verifier succeeds 
on transformed program. Of course the transformation will be useful 
only if results on 
transformed program can be used to get results on original program.

Loops existing in a (backward) sliced program, may compute value of 
some variables that impact outcome of assert expression in 
following ways:

\begin{enumerate}
\item Impact on value of assert expression, possibly through a 
     chain of assignments.
\item Impact on value of some predicates, possibly through a 
     chain of assignments, that 
\begin{enumerate}
\item Value impact the assert statement
\item Influence the reachability of assert statement
\end{enumerate}
\end{enumerate}

Obviously, loops of type (1) can not be eliminated, as they directly impact the
value of assert expression.
However, loops of type (2) can be eliminated if property can be checked even 
after abstracting 
the predicates which these loops are impacting.
Since we
want the transformed program to be useful in deciding the outcome on 
original program, ideally we would like the transformed program to be 
{\em property equivalent} to original program.  So we focus on what kind of
predicates can be abstracted so that loops and computations contributing
to value of such predicates can be eliminated.

Let $C$ be a predicate in a given program $P$ in which a property 
is encoded through an assert statement $A$. 
Let $P'$ be an abstract program obtained from $P$ by 
replacing right hand side  of reaching definitions used for $C$,  with 
a non deterministic value, denoted by `*'.

Following observations are obvious:

\begin{enumerate}
\item If predicate $C$ is loop invariant in concrete program $P$ then so it will be in abstract program $P'$ too.
\item If property holds in abstract program $P'$ then it will hold in 
concrete program $P$ also
\item If property gets violated in abstract program $P'$ then it may or may not get violated in
concrete program $P$.
\end{enumerate}

Predicate $C$ is called {\em irrelevant to property} (\ITPonly), if
abstract program $P'$ is {\em property equivalent} to concrete program $P$. 

Actually if one picks up any predicate $C$ from concrete program $P$ and 
generate abstract program $P'$ in 
the manner mentioned above then  case (1) and (2) will always hold.
It is the case (3) which differentiates an arbitrary predicate 
from
an \ITPonly\ predicate. For predicate $C$ to be \ITPonly, in case (3), 
property should get violated in concrete program $P$ also.
\section{Example}
\begin{figure}
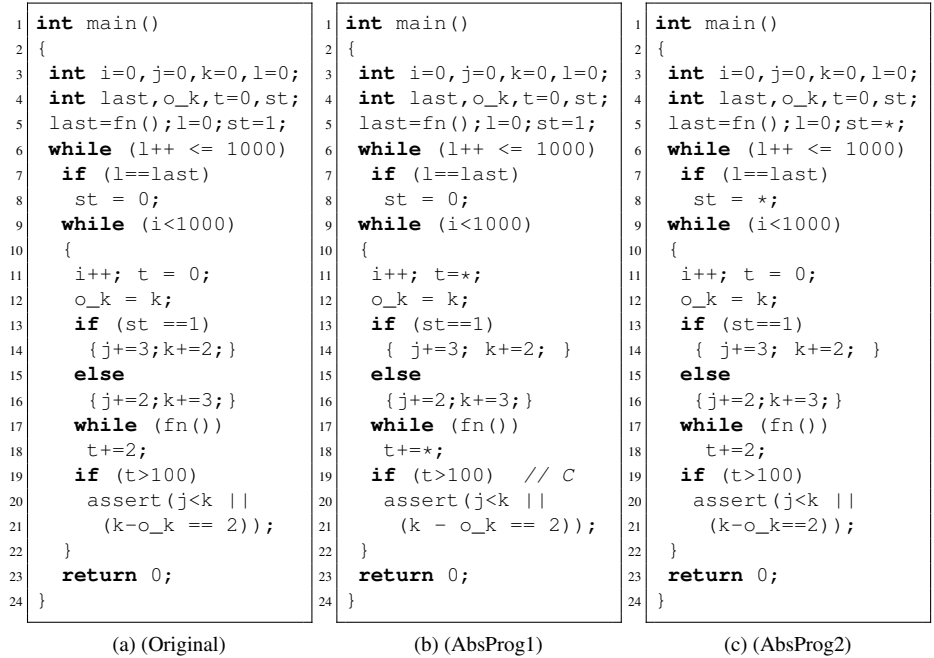

\begin{minipage}{1.40in}
\subfloat[(Original)]{%
\lstinputlisting[style=MyListStyle]{motivating_prog1.c}%
}
\end{minipage}
\hspace{0.15in}
\begin{minipage}{1.40in}
\subfloat[(AbsProg1)]{%
\lstinputlisting[style=MyListStyle]{motivating_prog1a.c}%
}
\end{minipage}
\hspace{0.15in}
\begin{minipage}{1.40in}
\subfloat[(AbsProg2)]{%
\lstinputlisting[style=MyListStyle]{motivating_prog2.c}%
}
\end{minipage}
\caption{Abstracting non data impacting predicates}
\label{fig:condabs-example}
\end{figure}

In figure \ref{fig:condabs-example} we show a program annotated as
(Concrete). If we abstract the predicate \texttt{(t > 100)} we will see
that in resulting abstract program, annotated as (AbsProg1), 
property will hold 
and therefore \texttt{(t > 100)} is an \ITPonly\ 
predicate. 

Considering predicate \texttt{(st==1)}, we observe that in concrete program,  
$st$ is
loop invariant for outer while loop.
We abstract 
this predicate also as explained above and resulting abstract program is
shown in Figure \ref{fig:condabs-example} with annotation (AbsPrg2).
It is obvious that the predicate \texttt{(st==1)} is loop
invariant in abstract program too.
The property holds on this
abstract program also and therefore \texttt{(st==1)} is an \ITPonly\ 
predicate. 

Suppose we change the assignment to \texttt{j} at line 16 to \texttt{ j+=3 }in concrete program. Now the
assert will be violated in modified abstract program (AbsProg2), when one assigns a suitable 
value
at non-deterministic assignment to \texttt{st} which makes predicate \texttt{(st==1)} as
$false$. And if an input exists with 
which 
\texttt{st} is assigned value 0 then it will get violated in concrete 
program also. 
It will not get
violated in concrete program only if for no input \texttt{st} gets value 0. Which means
predicate \texttt{st==1} always remains $true$ in concrete program.

\section{A refined definition of \ITPonly\ predicates}
To ensure that loops computing the value of an \ITPonly\ predicate get eliminated, it
will be good to place the non-deterministic assignments at only one
point for a predicate rather than doing so at all reaching definitions 
individually.
We choose this point as the nearest common post dominator of these 
reaching definitions. We claim that such a point exists and
is unique.  Without loss of generality, we assume that this point will be 
on a straight line
segment of CFG(single entry, single exit). 
Let us call such a point as {\em computing point of predicate} $C$ and denote it as 
$\widehat{\mathcal{C}}$. 
In earlier example, we showed that when property holds in concrete program
due to the
predicate $C$ being constant, say $true$, it may not hold in abstract 
program 
because the predicate $C$ can be made $false$ also in abstract 
program. Such predicates can 
never be \ITPonly\ as per the earlier definition. Considering this fact,
we refine our definition of \ITPonly\ predicates using the modified 
abstraction strategy.

Let $C$ be a predicate in a given program $P$ and let $A$ be an 
assertion encoding a property to be checked in program $P$. 
Let $P'$ be an abstract program obtained from (concrete) program $P$ by 
inserting non-deterministic assignments at predicate computing point 
$\widehat{\mathcal{C}}$, 
for variables used in predicate $C$. The scheme is shown diagrammatically 
in Figure \ref{fig:abs-paths}. As can be seen in Figure \ref{fig:abs-paths},
in abstract program $P'$, point $\widehat{C}$ is 
deemed to be after the placement of non-deterministic assignments. 
We claim that abstract program $P'$ is 
a sound abstraction of concrete program $P$.
We make following observations regarding this abstraction mechanism.
\begin{figure}
\centerline{\includegraphics[height=50mm, width=120mm]{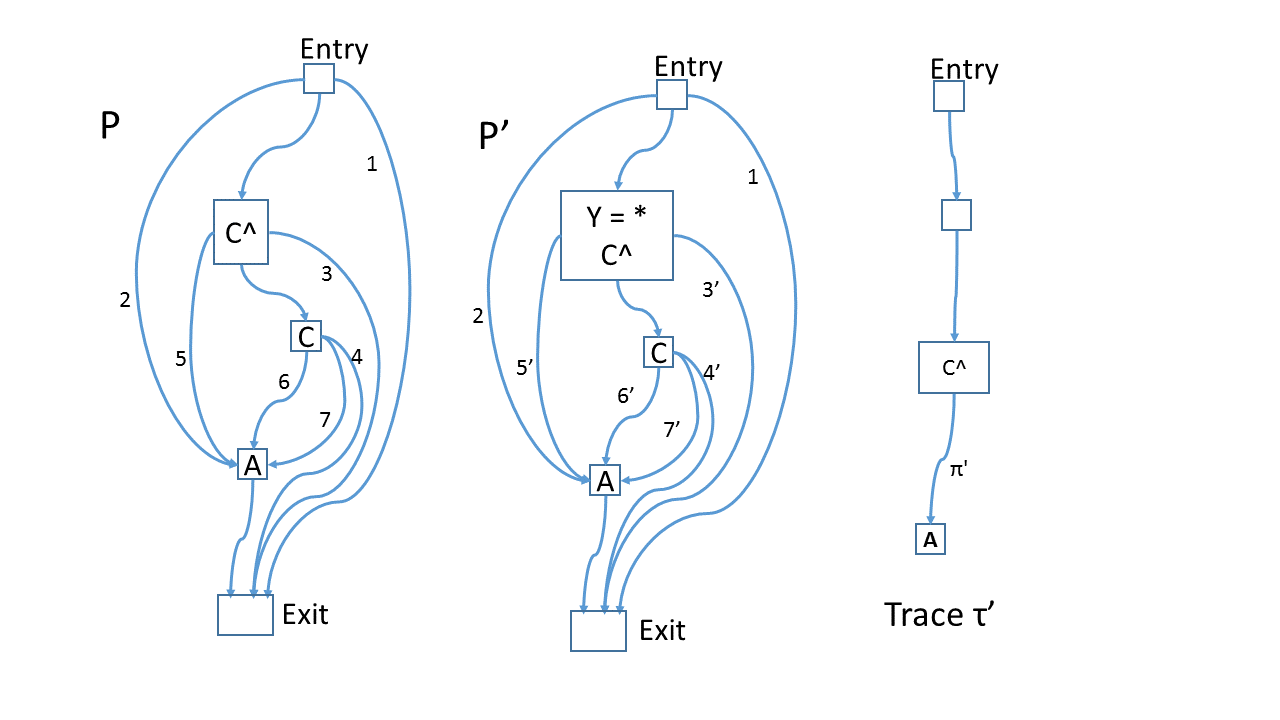}}
\caption{Abstraction and program paths}
\label{fig:abs-paths}
\end{figure}
\begin{enumerate}
\item  Since abstract program $P'$ is a sound abstraction of concrete 
program $P$, if property holds in 
abstract program $P'$, it will hold in concrete program $P$ also.
\item If the property does not hold in abstract program $P'$ then the 
counter example must follow one of the paths labeled as {2,5',6',7'}.
\item Execution traces which bypass 
predicate computing point $\widehat{C}$ will be same in concrete as well 
as abstract program (shown by paths labels 1 and 2 in diagram).
Consequently, if there is a counter example in abstract program $P'$ 
bypassing $\widehat{\mathcal{C}}$, like paths labeled (2) in Figure
\ref{fig:abs-paths}, then the 
same counter example will apply to concrete program $P$ also.
\end{enumerate}

When we have a counter example trace passing throigh $\widehat{C}$, then 
the suffix of the trace starting with last occurrence of $\widehat{C}$ will
be called as {\em violating-suffix}. Obviously all the occurences
of predicate $C$, if any, on the violating-suffix will evaluate to same 
value. A counter example trace will have a violating-suffix if and only
if the trace passes through $\widehat{C}$.

\begin{definition}{(Irrelevant to Property (ITP) Predicates)}
\label{itp_def}
Predicate $C$ is said to be \ITPonly\ for the property encoded with 
assertion $A$ when one of the following holds:
\begin{enumerate}
\item If the property gets violated in abstract program $P'$, with a 
counter example trace having a violating-suffix $\pi$ 
then the property gets violated in concrete program $P$ also with a
counter example trace having a violating-suffix same as $\pi$.
\item Property gets violated in abstract program $P'$ with a counter 
example trace having a violating-suffix with $C$  
evaluating to $b$ within the violating-suffix, and
predicate $C$ never evaluates to $b$ in concrete program $P$.
\end{enumerate}
\end{definition}

\section{A sufficient criterion for \ITPonly\ predicates}
From the Figure \ref{fig:abs-paths}, it is clear that we need to identify 
the predicates for which, the value of assert expression as well as
paths labeled as (5',6' and 7') are not dictated 
by values of variables used in predicate $C$. 
Let $Y$ denote set of such variables. 
In addition, the 
variables which dictate these paths and value of assert  expression, 
should have same values at $\widehat{C}$ in abstract program $P'$
as well as in concrete program $P$.

To formalise the idea we proceed as follows. In the rest of the discussion, 
we assume that program $P$ is already sliced with respect
to assertion $A$. Suppose we abstracted a given
program $P$ to abstract program $P'$ with respect to a predicate $C$,
as per the strategy mentioned earlier. Let 
the property be violated in abstract program $P'$. We want to know under 
what conditions 
we can say that it will be violated in concrete program $P$ also. 
We need to consider only the case where counter example found in 
abstract program passes through the 
predicate computing point $\widehat{C}$.
Let such a counter example trace $\tau'$ 
have program state $\sigma'$ at the last 
occurrence of predicate computing point $\widehat{\mathcal{C}}$ in $\tau'$.
To be precise, $\sigma'$ is the program state just after the sequence of 
non-deterministic assignments placed at $\widehat{C}$.
Let $Y$ be set of variables used in predicate $C$.
Let $\pi'$ be the violating-suffix of trace $\tau'$, as shown in 
Figure \ref{fig:abs-paths}. 

Let $X'$ be set of variables whose value in $\sigma'$ 
determined the  value of $A$ and the
violating-suffix $\pi'$, excluding the control points corresponding to 
predicate $C$.
Intuitively, $X'$ is the set of live variables at 
predicate computing point $\widehat{\mathcal{C}}$, computed as follows:
\begin{enumerate}
\item Start at $A$ with variables used in $A$ as initial set of live
variables.
\item Proceed along path $\pi'$ up to $\widehat{C}$ computing live 
variables at different nodes as per the traditional approach.
\item Treat nodes corresponding to $C$ as identity.
\end{enumerate}

Now, let us see what will it require to get a counter example in concrete 
program $P$ also. There are two cases to be considered.

\begin{enumerate}
\item []{}Case 1: Violating-suffix $\pi'$ bypasses condition $C$.\\
Suppose we get a program state $\sigma$ in concrete program $P$ 
at predicate computing point $\widehat{\mathcal{C}}$ 
such that
it has same values of variables in $X'$ as that in $\sigma'$. Now the path
taken in concrete program $P$, from $\widehat{C}$ starting in state 
$\sigma$ will be same as $\pi'$ and consequently, assertion $A$ will get 
violated in concrete program $P$ also.
\item []{}Case 2: Violating-suffix $\pi'$ passes through condition $C$.\\
Let $b$ be the violating-value of predicate $C$.
Suppose we get a program state $\sigma$ in concrete
program $P$ 
at predicate computing point $\widehat{\mathcal{C}}$ 
such that it has same values 
for variables in $X'$ as that in $\sigma'$ and predicate $C$ evaluates to
$b$ in state $\sigma$. 
It may be noted that value of predicate $C$ will not change 
till we do not revisit the predicate computing point $\widehat{C}$.
Therefore, the path followed from $\widehat{\mathcal{C}}$ starting from
state $\sigma$ will be same as $\pi'$ and consequently, it will 
definitely make  
$A$ get violated in concrete program $P$. 
\end{enumerate}

We generalise above observations along two lines:
\begin{enumerate}
\item We expand $X'$ to $X$ as set of 
live variables along all paths from predicate computing point 
$\widehat{\mathcal{C}}$ to assertion point $A$ 
(with the restriction that $\widehat{\mathcal{C}}$ does not repeat on 
 these paths). 
\item Rather than looking for a state $\sigma$ having same 
$X'$ restriction as that of $\sigma'$, we consider  
restriction of all 
states 
at $\widehat{C}$ 
with respect to $X$( generalisation of $X'$) in abstract program $P'$ 
as well as in concrete program $P$.
\end{enumerate}

To formalise,
let $\Sigma'$  and $\Sigma$ be set of program states at  
$\widehat{\mathcal{C}}$ in abstract program $P'$ and concrete 
program $P$ respectively. 
Let $\Sigma_b \subseteq \Sigma$ be set of states 
in which 
expression of predicate $C$ evaluates to $b \in \{true, false\}$.
We claim as follows:
\begin{claim}
\label{state_cond_for_itp}
If for a predicate $C$, both of the following hold then $C$ is \ITPonly.
\begin{enumerate}
\item [(S1)]{}
$ \lfloor \Sigma' \rfloor_X = \lfloor \Sigma \rfloor_X$
\item [(S2)]{}
$
     \forall b \in \{true, false\} . \Sigma_b \neq \emptyset \implies
     \lfloor \Sigma_b \rfloor_X = \lfloor \Sigma \rfloor_X 
     $
\end{enumerate}
\end{claim}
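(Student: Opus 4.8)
The plan is to show that the two state conditions (S1) and (S2) together imply that any counterexample trace in $P'$ can be matched by a counterexample trace in $P$ with a compatible violating-suffix, thereby discharging one of the two clauses of Definition~\ref{itp_def}. The argument splits exactly along the case analysis already set up in the text: whether the violating-suffix $\pi'$ of the abstract counterexample bypasses the control points for $C$ or passes through them. I would first dispose of the trivial situation: if the abstract counterexample bypasses $\widehat{\mathcal{C}}$ entirely (path labeled (2) in Figure~\ref{fig:abs-paths}), then observation~(3) of the previous section already gives that the same trace is a counterexample in $P$, and it has no violating-suffix, so there is nothing to prove. Hence assume the abstract counterexample trace $\tau'$ passes through $\widehat{\mathcal{C}}$, with abstract state $\sigma'$ at the last occurrence of $\widehat{\mathcal{C}}$ and violating-suffix $\pi'$.

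The core step is a \emph{suffix-transfer lemma}: if $\sigma$ is any state of $P$ at $\widehat{\mathcal{C}}$ with $\lfloor\sigma\rfloor_X = \lfloor\sigma'\rfloor_X$, and additionally (in the case where $\pi'$ passes through a control point of $C$) $C$ evaluates in $\sigma$ to the violating-value $b$, then executing $P$ from $\sigma$ follows precisely the path $\pi'$ and violates $A$. This is proved by induction along $\pi'$ from $\widehat{\mathcal{C}}$ to $A$: since $X$ was defined as the set of variables live along \emph{all} such paths (treating $C$-nodes as identity), the $X$-restricted states stay in lockstep through every assignment and every branch whose outcome is determined by those variables; the only branches not so determined are the $C$-nodes themselves, and there the hypothesis that $C$ has the same truth value $b$ (which cannot change before $\widehat{\mathcal{C}}$ is revisited, and $\pi'$ does not revisit it) forces the same branch. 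So the concrete run reproduces $\pi'$ verbatim and reaches the violated $A$; its violating-suffix is $\pi'$ itself, which is the same as the one in the abstract trace — exactly clause~(1) of the definition.

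It remains to produce such a concrete state $\sigma$. This is where (S1) and (S2) do their work. In the bypass-$C$ case (Case~1 in the text), $\pi'$ does not mention $C$, so we only need a concrete state at $\widehat{\mathcal{C}}$ agreeing with $\sigma'$ on $X$; condition (S1), $\lfloor\Sigma'\rfloor_X=\lfloor\Sigma\rfloor_X$, guarantees that $\lfloor\sigma'\rfloor_X\in\lfloor\Sigma\rfloor_X$, i.e. some reachable concrete state $\sigma$ at $\widehat{\mathcal{C}}$ has the required $X$-restriction. In the pass-through-$C$ case (Case~2), we additionally need $C$ to take value $b$ in $\sigma$. Since $\sigma'\in\Sigma'$ and $\sigma'$ lies on a run that later evaluates $C$ to $b$ with no intervening visit to $\widehat{\mathcal{C}}$, the abstract state already has this property; by (S1) its $X$-restriction is realised by some concrete state, and then (S2) — specifically, $\lfloor\Sigma_b\rfloor_X=\lfloor\Sigma\rfloor_X$, which is non-vacuous because $\Sigma_b\neq\emptyset$ is witnessed by the fact that we need $b$ to be realisable — lets us upgrade that to a concrete state in $\Sigma_b$, i.e. one in which $C$ actually evaluates to $b$. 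If instead $\Sigma_b=\emptyset$, then $C$ never evaluates to $b$ anywhere in the reachable states of $P$ at $\widehat{\mathcal{C}}$; combined with the fact that along a violating-suffix $C$ is constant, this means $C$ never evaluates to $b$ in $P$ at all, which is precisely clause~(2) of Definition~\ref{itp_def}. So in every case one of the two clauses holds, and $C$ is \ITPonly.

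The step I expect to be the main obstacle is the careful bookkeeping in the suffix-transfer lemma: making precise that the set $X$ of ``variables live along all paths from $\widehat{\mathcal{C}}$ to $A$, with $C$-nodes treated as identity'' is genuinely sufficient to pin down the branch outcomes along $\pi'$ — in particular arguing that the branches whose outcomes we \emph{cannot} control from $X$ alone are exactly the $C$-nodes, and nothing else, and that the ``$C$ is constant on a violating-suffix'' observation is deployed at the right granularity (truth value of the predicate $C$, not the values of the individual variables in $Y$). A secondary subtlety is handling loops on $\pi'$ — the path may revisit non-$\widehat{\mathcal{C}}$ program points many times — so the induction must be on position along the trace $\pi'$ rather than on CFG nodes, and one must confirm the liveness-based invariant is genuinely preserved across loop back-edges, which it is because $X$ was taken over \emph{all} paths, hence is closed under the relevant backward data-flow.
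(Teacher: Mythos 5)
Your proposal is correct and follows essentially the same route as the paper's proof: the same case split on whether the violating-suffix bypasses or passes through the condition $C$, with (S1) supplying a matching concrete state in the first case, (S1) together with (S2) supplying a state in $\Sigma_b$ in the second, and the $\Sigma_b=\emptyset$ situation discharged via clause (2) of the definition. Your explicit ``suffix-transfer'' induction merely fills in a step the paper asserts informally.
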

\begin{proof}
We will consider two cases:
\begin{enumerate}
\item[Case 1:]{}The counter example in $P'$ bypasses the predicate $C$. \\
By the precondition (S1) of the claim, 
 $\lfloor \Sigma' \rfloor_X = \lfloor \Sigma \rfloor_X$. 
 Since $\Sigma' \neq \emptyset$, 
   there will be a state $\sigma \in \Sigma$ 
at $\widehat{C}$ in concrete program $P$ such that $X$ restriction of 
$\sigma'$ will be 
same as $X$ restriction of $\sigma$. 
Therefore path followed from $\sigma$ in concrete program $P$ will be 
exactly same as violating-suffix $\pi'$ and so assertion $A$ will get violated.
\item[Case 2:]{}The counter example in $P'$ passes through $C$.\\
Let $b$ be the violating-value of predicate $C$.
If $C$ always evaluates to constant $\neg b$ in concrete
program $P$ then $C$ is \ITPonly\ as per the
definition ~\ref{itp-def}. 
So we assume that $C$ evaluates to $b$ also sometimes.
Therefore $\Sigma_b \neq \emptyset$. And consequently, by (S2) and (S1),
     $\lfloor \Sigma_b \rfloor_X = \lfloor \Sigma' \rfloor_X$
By similar argument as that in case (1), we can show that 
assertion $A$ will get violated in concrete program $P$ also..
\end{enumerate}
\end{proof}

\section{A computable criterion for \ITPonly\ predicates}
Given an assert $A$ in a program $P$, 
we want to identify predicates in program $P$ which satisfy the two 
conditions of
claim~\ref{state_cond_for_itp}. In these conditions, we talk about values of
$X$ and $Y$ in all program states at predicate computing point
$\widehat{C}$. 


First we consider the condition (S1),
$\lfloor \Sigma' \rfloor_X = \lfloor \Sigma \rfloor_X$.
We observe that starting from same initial state, 
the program state, restricted to $X$, at $\widehat{C}$ at its first 
occurrence
in a trace will be same in abstract program $P'$ and the concrete 
program $P$, provided $X$ and $Y$ are disjoint.
Subsequent changes to program states, restricted to $X$, at $\widehat{C}$ 
will be result of its transformation along the looping paths from this
occurrence of $\widehat{C}$ to its next occurrence. 
We observe that the only difference on paths followed from one occurrence
of $\widehat{C}$ to its next occurrence
in $P$ and $P'$ is new abstract assignments inserted for
$Y$ at $\widehat{C}$ in abstract program $P'$. 
Let $Z$ be the set of live variables at $\widehat{C}$ after traversing
all paths from one occurrence
of $\widehat{C}$ to its next occurrence, after starting with $X$ as set of 
live variables at $\widehat{C}$. It is easy to see that if $Z$ is
disjoint from $Y$ then the transformation
of $X$ from one occurrence of $\widehat{C}$ to its next occurrence will be
in same manner in concrete program as well as abstract program.



\begin{claim}
\label{disjoint_input_vars}
If following two conditions are satisfied then value of $X$ at 
$\widehat{C}$ will be same in $P$ and $P'$.
\begin{enumerate}
\item[(C1)]{} 
    $X$ and $Y$ are disjoint.
\item[(C2)]{}
    The set of live variables on paths from $\widehat{C}$ to 
    $\widehat{C}$ with respect to $\langle X, \widehat{C}\rangle$ is 
    disjoint from $Y$.
\end{enumerate}
\end{claim}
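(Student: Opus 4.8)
The goal is to establish exactly condition (S1) of Claim~\ref{state_cond_for_itp}, namely $\lfloor \Sigma' \rfloor_X = \lfloor \Sigma \rfloor_X$, from the two syntactic hypotheses. The plan is an induction on the index $k \ge 1$ of an occurrence of $\widehat{\mathcal{C}}$ along a trace: I will show that the set of program states reachable at the $k$-th occurrence of $\widehat{\mathcal{C}}$, restricted to $X$, is the same in $P$ and $P'$, in fact carrying the stronger invariant that the restrictions to $X \cup Z$ agree, where $Z$ is the live set referred to in (C2). Taking the union over all $k$ then gives the claim, since every state in $\Sigma$ (resp.\ $\Sigma'$) occurs at \emph{some} occurrence of $\widehat{\mathcal{C}}$.

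\textbf{Base case.} Programs $P$ and $P'$ differ only by the non-deterministic assignments to the variables of $Y$ inserted at $\widehat{\mathcal{C}}$, so the code executed from program entry up to the first occurrence of $\widehat{\mathcal{C}}$ is literally identical; hence the state just before $\widehat{\mathcal{C}}$ is reached in the same way in both. In $P'$ the inserted assignments then overwrite precisely the variables in $Y$, so the $P'$-state at $\widehat{\mathcal{C}}$ agrees with the $P$-state on every variable outside $Y$. By (C1) and (C2) both $X$ and $Z$ are disjoint from $Y$, so the two states agree on $X \cup Z$; and since the non-deterministic choice ranges over all values, each such $P$-state is matched by a $P'$-state and conversely.

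\textbf{Inductive step.} Suppose the agreement on $X \cup Z$ holds at the $k$-th occurrence, and consider the $(k{+}1)$-st, reached along a path $p$ from $\widehat{\mathcal{C}}$ to $\widehat{\mathcal{C}}$ that does not revisit $\widehat{\mathcal{C}}$ internally. The interior of $p$ contains no program modification --- the only edit is at $\widehat{\mathcal{C}}$ itself, and its effect on the start of $p$ is already baked into the $k$-th-occurrence state, which in $P'$ is taken just \emph{after} the inserted assignments. By (C2), no $Y$-variable is live at $\widehat{\mathcal{C}}$ with respect to $\langle X,\widehat{\mathcal{C}}\rangle$, so by soundness of live-variable analysis the branches taken along $p$ and the resulting values of the $X$-variables depend only on the values, at the start of $p$, of the live-in variables (which lie in $Z$), together with any input non-determinism on $p$, which is identical in $P$ and $P'$. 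Combined with the inductive hypothesis, the path taken and the $X$-values produced at the $(k{+}1)$-st occurrence therefore coincide in $P$ and $P'$; an analogous argument --- equivalently, strengthening the invariant to the closure of $X$ under the loop-body live-in operator, which by (C1)--(C2) remains disjoint from $Y$ --- propagates the agreement on $Z$ as well, closing the induction.

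The step I expect to be delicate is the second one: one must pin down precisely what ``the transformation of $X$ along $p$ depends on'' and check that (C1)--(C2) are exactly strong enough to keep the carried set of variables disjoint from $Y$ under repeated traversals of the loop body --- in particular that no branch inside a $\widehat{\mathcal{C}}$-to-$\widehat{\mathcal{C}}$ path is steered by a $Y$-variable (this is what (C2) forbids, and it is also what makes the trace of $P'$ obtained by copying a trace of $P$, with the non-deterministic assignments set to the values $P$ computes, a legal trace). Once this bookkeeping is discharged, the two set inclusions are symmetric and the equality $\lfloor \Sigma' \rfloor_X = \lfloor \Sigma \rfloor_X$ follows.
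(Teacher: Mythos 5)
Your proof follows essentially the same route as the paper's own (informal, one-paragraph) justification: agreement on $X$ at the first occurrence of $\widehat{\mathcal{C}}$ via (C1), then an induction over successive occurrences using the live set $Z$ of (C2) to argue that the transformation of $X$ along each $\widehat{\mathcal{C}}$-to-$\widehat{\mathcal{C}}$ path is identical in $P$ and $P'$. Your write-up is in fact more careful than the paper's sketch --- notably in flagging that the carried live set must be closed under the loop-body live-in operator (i.e., taken as a fixed point) for the induction to close, a point the paper glosses over.
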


Now we consider the criterion (S2) 
$\lfloor \Sigma_b \rfloor_X = \lfloor \Sigma \rfloor_X$.
Intuitively, this condition 
requires that every $X$ restricted state
of $\Sigma$ is some $X$ restricted state of $\Sigma_b$ also.
Since set of states $\Sigma_b$ will be decided by values of $Y$,
      intuitively, this criteria can be met when $X$ and $Y$ get their
      values in independent manner. That is they are not related in any
      manner when computation proceeds from ENTRY to $\widehat{C}$.

Consider programs given in Figure \ref{fig:related-values}. In program
(a), values of \texttt{x} and \texttt{y} at line 5 get related (they will 
be same as \texttt{z}).
In program (b) values of \texttt{x} and \texttt{y} get related as they 
change together
in the enclosing loop. In program (c) values of \texttt{x} and \texttt{y}
are not related.

\begin{figure}
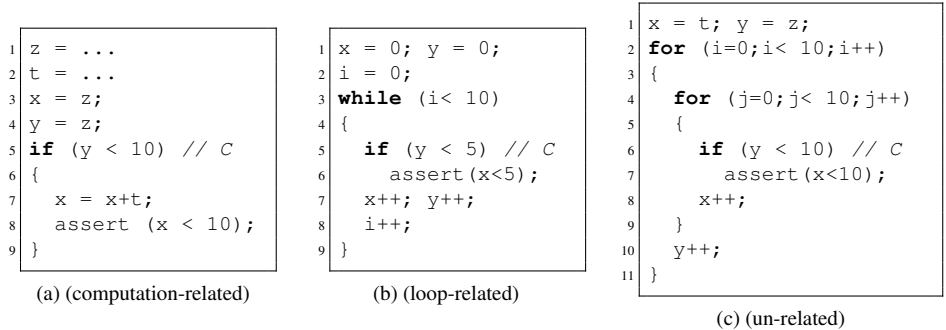

\begin{minipage}{1.25in}
\subfloat[(computation-related)]{%
\lstinputlisting[style=MyListStyle]{related_values_prog1.c}%
}
\end{minipage}
\hspace{0.30in}
\begin{minipage}{1.25in}
\subfloat[(loop-related)]{%
\lstinputlisting[style=MyListStyle]{related_values_prog2.c}%
}
\end{minipage}
\hspace{0.30in}
\begin{minipage}{1.50in}
\subfloat[(un-related)]{%
\lstinputlisting[style=MyListStyle]{related_values_prog3.c}%
}
\end{minipage}
\caption{Illustration of values getting related}
\label{fig:related-values}
\end{figure}

Intuitively, we can achieve independence of $X$ and $Y$ by ensuring 
following:
\begin{enumerate}
\item No computation impacts value of both $X$ as well as $Y$.
\item Values of $(X, \widehat{C})$ and $(Y, \widehat{C})$ do not 
change together in different iterations of any loop enclosing $\widehat{C}$.
\end{enumerate}

So now we will derive computable criteria which will satisfy 
these two requirements 
individually and then show that together they are sufficient to 
ensure (S2). 

\subsection{Non interfering computations}

We observe that only the computations in value slice   
of $\langle V, \ell \rangle$  may  affect value of $(V, \ell)$.
So obviously we need to consider the common computations belonging to
value slice of $\langle X, \widehat{C}\rangle$ as well as to that of
$\langle Y, \widehat{C}\rangle$.
But when we want to see if values of $(X, \widehat{C})$ and $(Y, \widehat{C})$ may get affected by a common computation, we 
need to look at computations which
provide values to predicates that only control reachability of $\widehat{C}$ but
otherwise are not common to value slice of 
$\langle X, \ell \rangle$ and $\langle Y, \widehat{C}\rangle$.
This is because, a controlling predicate in a way 
restricts the values of variables participating in the predicate, along 
the $true$ and $false$ branches. If such variables, later take part in 
computations which affect the value of 
$\langle X, \widehat{C} \rangle$ 
as well as $\langle Y, \widehat{C}\rangle$ then 
the computations which provide value to such predicates, should
be considered as candidates affecting the value of 
$\langle X, \widehat{C} \rangle$ 
as well as $\langle Y, \widehat{C}\rangle$.

To illustrate, consider programs of Figure 
\ref{fig:control-interfere}.
In both programs (a) and (b), predicates
$Q_1$ at line 1 and $Q_2$ at line 4 are controlling the reachability of 
predicate computing point $\widehat{C}$ at line 7. We notice that, 
in program (a), through value of variable \texttt{z} at $Q_1$, 
predicate $Q_1$ will restrict the value of \texttt{x}
at $\widehat{C}$. Similarly, value of $z$ at $Q_1$  will also play a role 
in predicate $Q_2$ restricting the value of \texttt{y} at $\widehat{C}$ 
(assuming value of
\texttt{z} does not get reassigned from $Q_1$ to $Q_2$). As a result, in
program (a), 
computation of \texttt{z} before $Q_1$ is relating value of \texttt{x}
and \texttt{y} at $\widehat{C}$ through predicates $Q_1$ and $Q_2$. 
However, in 
program (b), it is value of variable \texttt{z1} which plays a role in
predicate $Q_1$ restricting value of \texttt{x} at $\widehat{C}$ and it
is value of variable \texttt{z2} which plays a role in predicate $Q_2$ 
restricting value of \texttt{y} at $\widehat{C}$.
As a result value of \texttt{x} and \texttt{y} at $\widehat{C}$ do not get
related due to $Q_1$ and $Q_2$ in program (b).

To find out whether some computation relates values of $(X, \widehat{C})$ 
and $(Y, \widehat{C})$, we will extend the concept of {\em value impacting} 
statement and call it {\em extended value impacting}.
A predicate which controls the reachability of point of interest
and uses same definition
of a variable which is used by a {\em value impacting node},
is also treated as a {\em value impacting node}. 
For example, in program(a) of Figure 
\ref{fig:control-interfere}, predicate $Q_2$ uses same value of $y$ as
that used in  value impacting assignment \texttt{y=y+1} for 
$\langle {y}, \widehat{C}\rangle$. Therefore, predicate $Q_2$ is also 
considered as  value impacting for $\langle {y}, \widehat{C}\rangle$. 
Similarly predicate
$Q_1$ will be considered as value impacting for 
$\langle{x}, \widehat{C}\rangle$.

\begin{definition}{\em (Extended Value-impacting node)}
\label{vic_def}
A node $s$ {\em extended-value-impacts} $\Upsilon=\langle l, V \rangle$, 
if any of the following conditions hold:\\
1.  $s$ is an assignment in $DU(\Upsilon)$.\\
2.  $s$ is an assignment, and there exists  a node $t$ such that
$t$ {\em extended-value-impacts} $\Upsilon$ and $s$ is in $DU(LV(t))$.\\
3. $s$ is a predicate $c$  from which there exist paths $\pi_1$ and
  $\pi_2$ starting with  the out-edges of $c$ and  ending at the first
  occurrence of  $l$.  Further,  there exists a  node $t \neq  c$ such
  that $t$  {\em extended-value-impacts} $\Upsilon$,  and (a) $t$ is  the 
  first
  value-impacting  node  along  $\pi_1$  (b)  $t$  is  not  the  first
  value-impacting node along $\pi_2$.\\
4. $s$ is a predicate $c$  which transitively controls
  $l$.  Further,  there exists an  {\em extended-value-impacting} node 
  $t \neq  c$ and an assignment $d$
  such that $d$ is in  $DU(LV(t))$ and $d$ is in $DU(LV(c))$.
\end{definition}

A slice made up from {\em extended value impacting} statements, will be 
called {\em extended value slice}. 
We know that in a program which is already (backward) sliced with respect 
to $\langle V, \ell \rangle$,
only variables which are live at $ENTRY$ point dictate the value 
of $(V, \ell)$ and
reachability of $\ell$. Similarly, in an extended value slice with respect 
to $\langle V, \ell \rangle$, variables live at $ENTRY$ point will dictate 
value of $(V, \ell)$, whenever $\ell$ is reached.
For extended value slice we will call these variable as value base 
of $(V, \ell)$
and denote them as $VB(V, \ell)$. We will denote the set of extended value 
impacting nodes for $\langle V, \ell\rangle$ by $EVI(V, \ell)$.

\begin{figure}
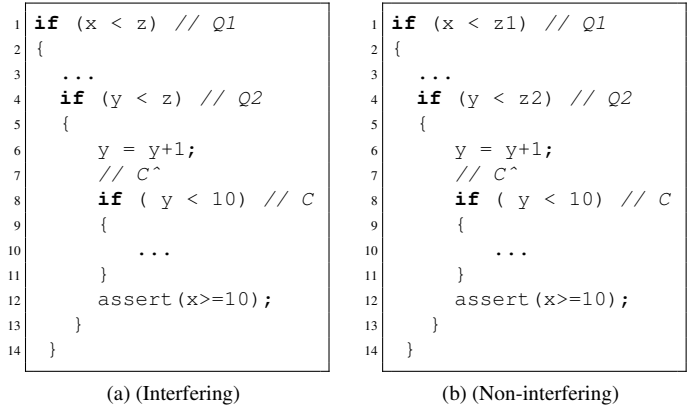

\begin{minipage}{1.50in}
\subfloat[(Interfering)]{%
\lstinputlisting[style=MyListStyle]{control_interfere_prog1.c}%
}
\end{minipage}
\hspace{0.30in}
\begin{minipage}{1.50in}
\subfloat[(Non-interfering)]{%
\lstinputlisting[style=MyListStyle]{control_interfere_prog2.c}%
}
\end{minipage}
\caption{Interference of control conditions with values}
\label{fig:control-interfere}
\end{figure}

We claim that if set of value base variables for 
$\langle X, \widehat{C}\rangle$ and $\langle Y, \widehat{C}\rangle$ are
disjoint then no computation affects value of 
both when $\widehat{C}$ is not enclosed in a loop.

(C3) $VB(X, \widehat{C}) \cap VB(Y, \widehat{C}) = \emptyset$

\subsection{Avoiding relation through loops}
To find a criterion which ensures that value of 
$\langle X, \widehat{C} \rangle$ and 
$\langle Y, \widehat{C} \rangle$ do not become related 
due to they changing together in a loop $L$, we need to see how to identify
if some variable is changing in a loop.

Suppose we observe value of a set of variables $Z$ at a point $\ell$ 
which is inside a loop $L$. Let $\llbracket L \rrbracket$ represent the 
body of loop $L$ and $VB(L)$ denote the value base variables of 
loop controlling condition of loop $L$.
Let $EVI(Z, \ell)$ be set of extended value impacting statements for 
$\langle Z, \ell \rangle$. 
We claim that value of $Z$ at $\ell$ may change with different
iterations of loop $L$ only if at least one  extended value impacting 
statement from 
$EVI(Z, \ell)$ is outside the body of loop $L$ and at least one such 
statement is inside the loop body. Let us call such loops
as {\em value changing loops} for $\langle Z, \ell \rangle$. That is
if $EVI(Z, \ell) \cap \llbracket L \rrbracket \neq \emptyset$ and 
$EVI(Z, \ell) \subseteq \llbracket L \rrbracket$ then 
only $\langle
Z, \ell \rangle$ can change its value in different iterations of loop $L$.

So to ensure that value of $\langle X, \widehat{C} \rangle$ and 
$\langle Y, \widehat{C} \rangle$
do not change together in any enclosing loop of $\widehat{C}$, following
criterion (C4) should be satisfied.

\begin{enumerate}
\item[(C4)]{}For all loop $L$ enclosing $\widehat{C}$, one of the 
following should hold.
\begin{enumerate}
\item[(a)]{}
$(EVI(X, \widehat{C}) \cap \llbracket L \rrbracket = \emptyset) \vee 
(EVI(X, \widehat{C}) \subseteq \llbracket L \rrbracket)$
\item[(b)]{}
$(EVI(Y, \widehat{C}) \cap \llbracket L \rrbracket = \emptyset) \vee 
(EVI(Y, \widehat{C} \rangle) \subseteq \llbracket L \rrbracket) $
\end{enumerate}
\end{enumerate}

We observe that, if value of $(V, \ell)$ changes in different
iterations of a loop $L$ then loop controlling conditions of all outer
loops (enclosing loop $L$) will be value impacting node of 
$\langle V, \ell \rangle$. Based on this observation and criterion (C4)
    about value changing loops, we have following properties for 
    value changing loops of $(X, \widehat{C})$ and $(Y, \widehat{C})$.

\begin{enumerate}
\item[(P1)]{}
if $(X,\widehat{C})$
changes in a loop $L$ then for all outer loops $L'$ of $L$, 
$VB(L') \cap VB(X, \widehat{C}) = \emptyset \vee VB(L') \subseteq VB(X, \widehat{C})$.
\item[(P2)]{}
if $(Y,\widehat{C})$
changes in a loop $L$ then for all outer loops $L'$ of $L$, 
$VB(L') \cap VB(Y, \widehat{C}) = \emptyset \vee VB(L') \subseteq VB(Y, \widehat{C})$.
\item[(P3)]{}if $(X,\widehat{C})$ changes in a loop $L_1$ and $(Y,\widehat{C})$ changes 
in a loop $L_2$ then $VB(L_1)$ and $VB(L_2)$ are disjoint.
\item[(P4)]{}
if $(X,\widehat{C})$ changes in an 
inner loop 
then $VB(L) \subseteq VB(X, \widehat{C})$.
\item[(P5)]{}
if $(Y,\widehat{C})$ changes in an 
inner loop 
then $VB(L) \subseteq VB(Y, \widehat{C})$.
\end{enumerate}



\begin{claim}
\label{cond_for_independence}
If criteria (C3) and (C4) hold for a predicate $C$ then following will hold:
$
     \forall b \in \{true, false\} . \Sigma_b \neq \emptyset \implies
     \lfloor \Sigma_b \rfloor_X = \lfloor \Sigma \rfloor_X 
     $
\end{claim}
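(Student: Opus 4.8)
The plan is to show that the two computable criteria (C3) and (C4) together force every $X$-restricted state reachable at $\widehat{C}$ to be realisable simultaneously with {\em any} chosen truth value $b$ of $C$, i.e.\ with some $Y$-valuation making $C$ evaluate to $b$. Since $\lfloor \Sigma_b \rfloor_X \subseteq \lfloor \Sigma \rfloor_X$ holds trivially, the real content is the reverse inclusion: pick an arbitrary $\sigma \in \Sigma$ reaching $\widehat{C}$ (along some trace $\tau$ in the concrete program), and exhibit another trace $\tau_b$ reaching $\widehat{C}$ whose final state agrees with $\sigma$ on all of $X$ but whose $Y$-part is chosen so that $C$ is $b$. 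I would build $\tau_b$ by ``re-running'' $\tau$ while substituting the computation that feeds $Y$ with an alternative computation that lands $Y$ in the region where $C = b$ (nonempty because $\Sigma_b \neq \emptyset$, so such a $Y$-valuation is reachable by {\em some} trace).

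The argument splits exactly along the two requirements (C3) and (C4) are designed to discharge. First, using (C3), $VB(X,\widehat{C}) \cap VB(Y,\widehat{C}) = \emptyset$: by the earlier development (the notion of value base and extended value slice), the value of $(X,\widehat{C})$ is a function only of the $ENTRY$-values of $VB(X,\widehat{C})$, and the value of $(Y,\widehat{C})$ a function only of $ENTRY$-values of $VB(Y,\widehat{C})$, {\em as long as $\widehat{C}$ is not enclosed in a loop}. Disjointness of the two value bases then means the two projections depend on disjoint coordinates of the initial state, so I can freely perturb the $VB(Y,\widehat{C})$ coordinates of $\sigma$'s originating initial state — moving $Y$ into the $C=b$ region, which is possible since $\Sigma_b$ is witnessed by some initial state — without disturbing the $VB(X,\widehat{C})$ coordinates, hence without disturbing $\lfloor\sigma\rfloor_X$. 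I would also need that this perturbation does not destroy reachability of $\widehat{C}$: here I invoke the extended-value-impacting machinery (Definition \ref{vic_def}, clauses 3 and 4), whose whole point is that any predicate controlling reachability of $\widehat{C}$ that shares a definition with the $X$- or $Y$-computation is already counted in the respective $EVI$/value base, so a perturbation confined to $VB(Y,\widehat{C})$ cannot change the truth value of any reachability-controlling predicate that is not itself a $Y$-predicate.

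Second, to handle the case where $\widehat{C}$ {\em is} enclosed in one or more loops, I would induct over the loop nesting (innermost to outermost), using (C4) and the derived properties (P1)--(P5). The key step: for each enclosing loop $L$, (C4) says that $L$ is not simultaneously a value-changing loop for both $\langle X,\widehat{C}\rangle$ and $\langle Y,\widehat{C}\rangle$ — at most one of $X,Y$ ``moves'' with the iterations of $L$. So at the particular iteration of $L$ in which $\tau$ arrives at $\widehat{C}$ in state $\sigma$, I can choose the iteration count / loop behaviour for the $Y$-side independently of the $X$-side: if $L$ is value-changing for $X$ but not for $Y$, the $Y$-value at $\widehat{C}$ is loop-invariant and can be set once at the level outside $L$; symmetrically if it is value-changing for $Y$ but not $X$. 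Properties (P3)--(P5) ensure the value bases of the controlling conditions of these loops are themselves suitably disjoint / contained, so the loop-trip-count choices for the $X$-relevant loops and the $Y$-relevant loops do not interfere. Peeling loops one at a time, I reduce to the loop-free situation already handled by (C3), completing the construction of $\tau_b$; its terminal state $\sigma_b$ satisfies $\lfloor\sigma_b\rfloor_X = \lfloor\sigma\rfloor_X$ and $C(\sigma_b) = b$, so $\lfloor\sigma\rfloor_X \in \lfloor\Sigma_b\rfloor_X$, giving $\lfloor\Sigma\rfloor_X \subseteq \lfloor\Sigma_b\rfloor_X$ and hence equality.

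The main obstacle I anticipate is the ``re-running'' construction in the presence of loops where $X$ {\em does} change across iterations: I must argue that replaying $\tau$ with a modified $Y$-computation still reaches $\widehat{C}$ at the {\em same} iteration with the {\em same} $X$-value, rather than some other iteration. This is where (C4)(a)/(b) and (P1)--(P5) must be used carefully to guarantee that the number of iterations of each $X$-governing loop is determined by data disjoint from anything the $Y$-perturbation touches (and vice versa), so the two ``schedules'' can be composed. Pinning down precisely that the loop-exit conditions along $\tau$ are preserved — i.e.\ that no $Y$-perturbation flips a loop guard that governs an $X$-changing loop — is the delicate point, and it rests entirely on the extended-value-impacting closure having correctly absorbed all such cross-controlling predicates into the value bases.
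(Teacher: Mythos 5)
Your proposal takes essentially the same route as the paper: the paper also proves the nontrivial inclusion $\lfloor \Sigma \rfloor_X \subseteq \lfloor \Sigma_b \rfloor_X$ by splicing together a hybrid input that draws the $VB(X,\widehat{C})$ coordinates from the input producing $\sigma$ and the $VB(Y,\widehat{C})$ coordinates from an input witnessing $\Sigma_b \neq \emptyset$, using (C3) for disjointness and (C4) together with (P1)--(P5) to assign the loop-controlling value bases without conflict. The one point you flag as delicate --- that the spliced input still reaches $\widehat{C}$ with the same $X$-projection and the chosen $Y$-projection --- is exactly the step the paper asserts as ``obvious,'' so your sketch is, if anything, more explicit about where the remaining work lies.
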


\begin{proof}
Let $\widetilde{X}$ and $\widetilde{Y}$ be the set of value base 
variables for $\langle X, \widehat{C}\rangle$
and $\langle Y, \widehat{C} \rangle$ respectively. In addition let 
$V$ be the set of all variables.
Let $\widetilde{Z} = V - (\widetilde{X} \cup \widetilde{Y})$.
Obviously, $\widetilde{X}, \widetilde{Y}$ and $\widetilde{Z}$ are disjoint.

Assume that $b$ is $true$ and 
$\Sigma_{true} \neq \emptyset$. 
Suppose $\sigma \in \Sigma$ and $\sigma_t \in \Sigma_{true}$. 
Let input state $I_1$ produce $\sigma$ and $I_2$ produce $\sigma_t$.
We partition $I_1$ into
$\lfloor I_1\rfloor_{\widetilde{X}}, \lfloor I_1 \rfloor_{\widetilde{Y}}$ 
and $\lfloor I_1 \rfloor_{\widetilde{Z}}$. 
Similarly, we partition $I_2$
into 
$\lfloor I_2\rfloor_{\widetilde{X}}, \lfloor I_2 \rfloor_{\widetilde{Y}}$ 
and $\lfloor I_2 \rfloor_{\widetilde{Z}}$. 

Let $VB(L)$ denote the base variables for loop controlling conditions of 
loop $L$.  
We construct an input $I_3$ such that values of $\widetilde{X}$ come from
$I_1$ and those for $\widetilde{Y}$ come from $I_2$.
For the remaining values we proceed as follows:

For each loop $L$ enclosing $\widehat{C}$, if $VB(L)$ is not included in
$\widetilde{X}$ or $\widetilde{Y}$ then we proceed as follows:
\begin{enumerate}
\item If $(X,\widehat{C})$ changes in $L$ then use values of $VB(L)$ from
$I_1$.
\item If $(Y,\widehat{C})$ changes in $L$ then use values of $VB(L)$ from
$I_2$.
\item If none of $(X,\widehat{C})$ and $(Y,\widehat{C})$ changes in $L$
then use values of $VB(L)$ from $I_1$
\end{enumerate}
By the properties (P1) to (P5), such assignments will be possible without 
any 
conflict. That is no variable will be required to have it value from $I_1$
as well as from $I_2$.
By the above step, some of variables from $\widetilde{Z}$ would have 
got their input values. For the remaining variables of $\widetilde{Z}$, if
any, take their values from $I_1$.
Obviously starting with input $I_3$, 
$\widehat{\mathcal{C}}$ will be reachable,
with value of $X$ at $\widehat{\mathcal{C}}$ same as  that with $I_1$ and 
value of $Y$ at $\widehat{\mathcal{C}}$ same as  
that with $I_2$. Therefore, one of the states produced with input $I_3$ at
$\widehat{C}$ will have same values of $X$ as that in $\sigma$ and same 
values of $Y$ as that in $\sigma_t$.
Let $\sigma_3$ be such a state. Obviously $C$ will evaluate to $true$ in 
state $\sigma_3$ and therefore $\sigma_3 \in \Sigma_{true}$.
Moreover $\lfloor \sigma \rfloor_X = \lfloor \sigma_3 \rfloor_X$.
Therefore $\lfloor \Sigma \rfloor_X \subseteq \lfloor \Sigma_{true} \rfloor_X$.
similarly we can show that $\lfloor \Sigma \rfloor_X \subseteq \lfloor \Sigma_{false} \rfloor_X$.
\end{proof}

It is obvious that checking the criteria (C1), (C2), (C3) and (C4) is 
computable for a given predicate $C$.

\section{Property checking with \ITPonly\ predicates}
Assume that we identified an \ITPonly\ predicate $C$ in program $P$ as
per the definition ~\ref{itp_def}.
We abstract $P$ to $P'$ using the abstraction strategy mentioned earlier. 
Now we run a property checker on program $P'$. We consider following cases:

\begin{enumerate}
\item[(i)] Property holds in $P'$.
\item[(ii)] Property gets violated with a counter example 
bypassing $\widehat{C}$. 
\item[(iii)] Property is violated in $P'$ with counter example passing 
through $\widehat{C}$.
\end{enumerate}
For case (i), we are done as the property will hold in program $P$ also. 
For case (ii),  the property will get violated in $P$ also with
same counter example as that of $P'$, as per the property of the 
abstraction mechanism. 
The case (iii) needs to be analysed further and we
proceed as follows:

Let input $I'$ be the the counter example, producing trace $\tau'$,
for abstract program $P'$. Since the counter example trace passes
through $\widehat{C}$, it will have a violatinng-suffix, say $\pi$.
We execute program $P$ with same input $I'$ to get trace $\tau$ and  
consider following possible outcomes for trace $\tau$. 

\begin{enumerate}
\item Assertion is violated.
\item Assertion is not violated and trace $\tau'$ bypasses 
the condition $C$.
\item Assertion is not violated in trace $\tau$ and trace $\tau'$ passes 
through $C$ with violating-value of $C$ 
as $b \in \lbrace true, false \rbrace$.
\end{enumerate}
If it is case (1) then we are done as we found a counter example in 
concrete program $P$ also.
For case (3) we consider following sub cases:
\begin{enumerate}
\item[3(a)] Trace $\tau$ never passed through $C$.
\item[3(b)] Trace $\tau$, passed through
$C$ and  $C$ evaluated to $b$ at least once.
\item[3(c)] Trace $\tau$ passed through
$C$ but $C$ always evaluated to $\neg b$. 
\end{enumerate}
If it is case 2, 3(a) or 3(c), we want to check the possibility of $C$ 
always evaluating to $\neg b$ in program $P$. For this, we create a program 
$\widehat{P}$ from $P$
by placing a new assert expression $(C==\neg b)$ at $\widehat{C}$ and 
removing 
the old assert. We also replace predicate $C$ with $\neg b$. We claim
that assertion in $\widehat{P}$ will hold if and only if $C$ always 
evaluated to $\neg b$ in program $P$.
We solve the new property checking problem $\widehat{P}$
and consider following possible outcomes.
\begin{enumerate}
\item[(A)] New property does not hold in $\widehat{P}$, implying that the 
predicate $C$ evaluates to $b$ also sometimes.
\item[(B)] New property in $\widehat{P}$ holds implying $C$ always evaluates to 
$\neg b$.
\end{enumerate}
In case (B), we create a new property checking problem $\widetilde{P}$ 
from 
$P$ by replacing $C$ with $\neg b$. Problem $\widetilde{P}$ will be 
property equivalent to 
$P$. Solving $\widetilde{P}$ will give solution to $P$.
 
For the cases (A) and 3(b), by definition ~\ref{itp_def} of \ITPonly\ 
predicates, we know that, for program $P$,  there exists a counter 
example trace
which has a violating-suffix matching with violating-suffix $\pi$ of
counter example trace of $P'$.
We compute a weakest-precondition $\psi$ of $\neg A$ 
for the path $\pi$.
Obviously, $\psi$ must be satisfiable at $\widehat{C}$ in $P$.
So there must exist an input for program $P$ which satisfies $\psi$ at 
$\widehat{C}$ and the same will indeed
be a countre example in $P$ for assertion $A$.

So we just need to find a counter
example which violates the assertion $\neg \psi$ at $\widehat{C}$.
So, we create an input generation problem as property checking for the 
assertion $\neg \psi$ in program $P$ at $\widehat{C}$. 
Obviously, this assertion
must get violated and if verifier finds an 
input for the same, it will
be counter example for the original property checking problem.

\balance


\end{document}